\newcommand{\pfstep}[1]{\vspace{.5em} {\it \noindent #1.}}
\newcommand{\be}{\begin{equation*}}
\newcommand{\ee}{\end{equation*}}
\newcommand{\ben}[1]{\begin{equation}\label{#1}}
\newcommand{\een}{\end{equation}}
\newcommand{\bea}{\begin{eqnarray}}
\newcommand{\eea}{\end{eqnarray}}
\newcommand{\bean}{\begin{eqnarray*}}
\newcommand{\eean}{\end{eqnarray*}}
\newtheorem{Theorem}{Theorem}
\newtheorem{lemma}{Lemma}
\newtheorem{Proposition}{Proposition}
\newtheorem{Remark}{Remark}
\newtheorem{Test}{Test}
\begin{document}

\title{The initial boundary value problem in General Relativity: the umbilic case}

\author{Grigorios Fournodavlos,\footnote{Sorbonne Universit\'e, CNRS, Universit\'e  de Paris, Laboratoire Jacques-Louis Lions (LJLL), F-75005 Paris, France,
		grigorios.fournodavlos@sorbonne-universite.fr}\;\; Jacques Smulevici\footnote{Sorbonne Universit\'e, CNRS, Universit\'e  de Paris, Laboratoire Jacques-Louis Lions (LJLL), F-75005 Paris, France, jacques.smulevici@sorbonne-universite.fr}}
		
		\date{}

\maketitle
\begin{abstract}
We give a short proof of local well-posedness for the initial boundary value problem in general relativity with sole boundary condition the requirement that the boundary is umbilic. This includes as a special case the totally geodesic boundary condition that we had previously addressed in \cite{FSm2020}. The proof is based on wave coordinates and the key observation that the momentum constraint is always valid for umbilic boundaries. This allows for a greater freedom in the choice of boundary conditions, since imposing the umbilic condition also provides Neumann boundary conditions for three of the four wave coordinates conditions. Moreover, the umbilic condition, being geometric, implies that geometric uniqueness in the sense of Friedrich holds in this specific case. 
\end{abstract}

	\section{Introduction}
	
	In this note, we establish the local well-posedness of the initial boundary value problem (IBVP) for the Einstein vacuum equations
	\begin{align}\label{EVE}
		 \mathrm{Ric}(g)=0,
	\end{align}
	in the specific case of an \emph{umbilic} timelike boundary.

	\subsection{The initial boundary value problem in General Relativity}

	In the standard formulation of the Cauchy problem for the Einstein {}{vacuum} equations, given a Riemannian manifold $(\Sigma, h)$ and a $2$-tensor $k$ satisfying the constraint equations 
	\begin{align}
\label{Hamconst}		\mathrm{R}(h)-|{k}|^2+(\text{tr}{k})^2=&\,0,\\
\label{momconst}		\mathrm{div}{k}-\mathrm{d}\text{tr}{k}=&\,0,
	\end{align}
	where $\mathrm{R}(h)$ is the scalar curvature of the Riemannian metric $h$ and all spatial operators are taken with respect to $h$, the goal is to construct a Lorentzian manifold $(\mathcal{M},g)$ solution to the Einstein equations, together with an embedding of $\Sigma$ into $\mathcal{M}$ such that $(h,{k})$ coincides with the first and second fundamental forms of the embedding. For the IBVP, we now require that $\Sigma$ is a manifold with boundary $\mathcal{S}$. We consider an additional manifold $\mathcal{B}= \mathbb{R} \times \mathcal{S}$, a section $\mathcal{S}_0=\{0\} \times \mathcal{S} $ of $\mathcal{B}$, which is identified with the boundary $\mathcal{S}$ of $\Sigma$ via a diffeomorphim $\psi_{s,s_0}$, and a set of functions $\mathcal{BC}$ on $\mathcal{B}$ representing source terms for the chosen boundary conditions. On top of the constraint equations, the initial and boundary data must also now verify the so-called corner or compatibility conditions, a set of equations involving $h,k, \mathcal{BC}$, their derivatives at all orders on $\mathcal{S}$ and $\mathcal{S}_0$, as well as a given real function $\omega$ on $\mathcal{S}$, which eventually will represent the angle between the initial slice and the timelike boundary.

A solution to the IBVP is then a Lorentzian manifold $(\mathcal{M},{\bf g})$ with a timelike boundary $\mathcal{T}$, an embbeding $\psi_t$ of a neigbhoorhood of $\mathcal{S}_0 \subset \mathcal{B}$ into $\mathcal{T}$, such that the boundary data $\mathcal{BC}$ can be identified with the corresponding data on $\mathcal{T}$, and an embedding $\psi_i$ of $\Sigma$ into $\mathcal{M}$ respecting the {}{initial} data, with $\psi_i(\mathcal{S})=\psi_t(\mathcal{S}_0)$, such that $\psi_t^{-1} \circ [\psi_i]_{|\mathcal{S}}=\psi_{s,s_0}$ and the angle between $\mathcal{T}$ and $\psi_i(\Sigma)$ is $\omega \circ \left[\psi_i^{-1}\right]_{\psi_i(\mathcal{S})}$.
	
There is a priori a large freedom in the choice of boundary conditions. The sources $\mathcal{BC}$ could correspond to the values of tensor fields encoding the geometry of $\mathcal{T}$, for instance, the first or second fundamental forms of $\mathcal{T}$, its conformal geometry, some curvature invariants {}{or} they could correspond to components of geometric tensor fields in some gauge and boundary conditions for the gauge itself. 
	
	
	The IBVP is {}{related} to many important aspects of {}{general relativity} and the Einstein equations, such as numerical relativity, the construction of asymptotically Anti-de-Sitter spacetimes, timelike hypersurfaces emerging as the boundaries of the support of massive matter fields or the study of gravitational waves in a cavity \cite{AKMS} and their nonlinear interactions. This problem was first addressed for the Einstein equations in the seminal work of Friedrich-Nagy \cite{FriedNag}, as well as by Friedrich \cite{Fried95} in the related Anti-de-Sitter setting\footnote{See also \cite{CarVal, EncKam} for extensions and other proofs of well-posedness in the Anti-de-Sitter case.}. Well-posedness {}{of} the IBVP has since been obtained in generalized wave coordinates, see \cite{KRSW} or the recent \cite{AA}\footnote{To be more precise, the boundary data in \cite{AA} relies on an auxiliary wave map equation akin to generalized wave coordinates. This introduces a geometric framework to address the IBVP, albeit for the Einstein equations coupled to the auxiliary wave map equation.} and for various first and second order systems derived from the ADM formulation of the Einstein equations, see for instance \cite{FSm, SarTig} {}{and previous work in numerics \cite{AndYork,FriReu}}. We refer the reader to \cite{SarTig2} for an extensive review of the subject. 
	
	
	\subsection{Geometric uniqueness and completeness} 
	
	One of the {}{remaining} outstanding issues, concerning the study of the Einstein equations in the presence of a timelike boundary, is the \emph{geometric uniqueness} problem of Friedrich {}{\cite{Fried09}}. Apart from the construction of asymptotically Anti-de-Sitter spacetimes {}{\cite{Fried95}}, where the timelike boundary is a conformal boundary at spacelike infinity and our recent work in the totally geodesic case, \cite{FSm2020}, 
all results establishing well-posedness, for some formulations of the {}{IBVP}, impose certain gauge conditions on the boundary, and the boundary data depend on these choices. In particular, given a solution to the Einstein equations with a timelike boundary, different gauge choices will lead to different boundary data, in each of the formulations for which well-posedness is known.

While the totally geodesic case handled in \cite{FSm2020} verifies the geometric uniqueness property, it does not verify \emph{geometric completeness}, in the sense that, it obviously cannot be used to reconstruct all globally hyperbolic vacuum solutions of the Einstein equations with a timelike boundary, since the boundary conditions have no freedom. The umbilic case that we study here is similar, it verifies geometric uniqueness but not geometric completeness. 
	
In the recent \cite{Fried21}, Friedrich proves that the gauge introduced in \cite{FriedNag} verifies geometric completeness. This work also provides an alternative proof to ours in the totally geodesic case. In fact, the strategy of \cite{Fried21} can also be extended to recover the umbilic case that we treat here. Interestingly, the proof of \cite{Fried21}, for the totally geodesic case, relies on the Friedrich-Nagy \cite{FriedNag} gauge and formulation of the IBVP and the a posteriori resolution of a hyperbolic system tangential to the boundary. By comparison, the proof that we provide here relies on the simpler wave gauge and is more direct, but of course it is not clear a priori whether our formulation can be used to study other boundary conditions.


	In the Anti-de-Sitter setting, the geometric uniqueness and completeness problems admits one solution: in \cite{Fried09}, Friedrich proved that one can take the conformal metric of the boundary as boundary data, which is a geometric condition independent of any {}{gauge}. Even in the Anti-de-Sitter setting, it is actually possible to formulate other boundary conditions, such as dissipative boundary conditions, for which one knows how to prove well-posedness, however, with a formulation of the boundary conditions that is gauge dependent and thus, such that we do not know whether geometric uniqueness holds or not. 
	
	\subsection{The initial value problem for umbilic boundary}
	Recall that a hypersurface $\mathcal{T}$ in a Lorentzian manifold $(\mathcal{M},g)$ is said to be \emph{umbilic} if the first and second fundamendal forms $(H, \chi)$ of $\mathcal{T}$ verify
	\begin{equation} \label{def:umbl}
	\chi= \lambda H, 
	\end{equation}
 	for some constant $\lambda \in \mathbb{R}$. We refer to \eqref{def:umbl} as the $\lambda$-umbilic boundary condition and say that $\mathcal{T}$ is $\lambda$-umbilic. 
	Such hypersurfaces occur for instance in black hole spacetimes possessing a photon sphere, such as the Schwarzschild spacetime and the generalizations introduced in \cite{CG}. 
	
	Our main result concerning the IBVP can be formulated as follows. 
	\begin{Theorem}\label{thmA}
		Let $(\Sigma, h, {k})$ be a smooth initial data set for the Einstein {}{vacuum} equations, such that $\Sigma$ is a $3$-manifold with boundary ${}{\partial\Sigma=\mathcal{S}}$. Also, let $\lambda \in \mathbb{R}$.
		For a smooth function $\omega$ defined on $\mathcal{S}$, we assume that the corner conditions (see Section \ref{se:cc}) corresponding to the $\lambda$-umbilic boundary condition hold on $\mathcal{S}$. Then, there exists a smooth Lorentzian manifold $({}{\mathcal{M}}, g)$ solution to the Einstein {}{vacuum} equations with boundary $\partial \mathcal{M}= \widehat{\Sigma} \cup \mathcal{T}$ such that 
		\begin{enumerate}
			\item there exists an embedding $\psi_i$ of $\Sigma$ onto $\widehat{\Sigma}$ with $(h,{k})$ coinciding with the first and second fundamental form of the embedding, \label{th:d}
			\item $\mathcal{T} \cap \widehat{\Sigma}=\psi_i( \mathcal{S})$ and $\mathcal{T}$ is a timelike hypersurface emanating from $i(\mathcal{S})$ at an angle $\omega \circ [\psi_i^{-1}]_{| \psi_i(\mathcal{S})}$ relative to $\hat{\Sigma}$, \label{th:tb}
			\item $\mathcal{T}$ is $\lambda$-umbilic, \label{th:tg}
			\item geometric uniqueness holds: given any other solution $(\mathcal{M}', g')$ verifying \ref{th:d}, \ref{th:tb} and \ref{th:tg}, $(\mathcal{M}, g)$ and $(\mathcal{M}', g')$ are both extensions\footnote{Recall that $(\mathcal{M}, g)$ is an extension of $(\mathcal{M}'', g'')$, if there exists an isometric {}{embedding} ${}{\psi}: \mathcal{M}'' \rightarrow \mathcal{M}$, preserving orientation, such that $\psi \circ i''=i$ and $\psi\circ\psi_t''=\psi_t$, where $i'': \Sigma \rightarrow \mathcal{M}''$, $\psi_t'':\mathcal{B}\rightarrow\mathcal{M}''$ are the embeddings of the initial hypersurface and timelike boundary into $\mathcal{M}''$ respectively.}   of yet another solution verifying \ref{th:d}, \ref{th:tb} and \ref{th:tg}. 
		\end{enumerate}
	\end{Theorem}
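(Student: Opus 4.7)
The plan is to reduce the Einstein vacuum equations to a quasilinear system of wave equations by imposing the wave coordinate gauge, solve the resulting IBVP with boundary conditions derived from the $\lambda$-umbilic condition, and then verify that the wave gauge defect vanishes so the reduced solution is genuinely Einstein. Geometric uniqueness is then established by conjugating any other admissible solution into this wave gauge via a wave map.

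First, I would fix coordinates $(x^0,x^1,x^2,x^3)$ in which $\mathcal{T}=\{x^3=0\}$ and $\Sigma=\{x^0=0\}$, so the unknowns of the reduced Einstein system are the ten components $g_{\mu\nu}$, each satisfying a quasilinear wave equation. The angle $\omega$ between $\Sigma$ and $\mathcal{T}$ fixes the initial value of the relevant off-diagonal components of $g$ along $\mathcal{S}$. Expanding $\chi_{ab}=\lambda H_{ab}$ for indices $a,b$ tangent to $\mathcal{T}$ yields Robin-type boundary conditions on the tangential-tangential components $g_{ab}$ involving the transverse derivatives $\partial_3 g_{ab}$. Together with boundary conditions on the remaining components compatible with the wave gauge (the three ``extra'' Neumann conditions provided by the umbilic condition, as stressed in the abstract), one obtains a full system of ten boundary conditions, whose compatibility with the initial data to all orders on $\mathcal{S}$ is precisely the corner conditions assumed in the theorem. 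Standard theory for quasilinear wave IBVPs with Neumann/Robin boundary conditions, along the lines of \cite{FSm2020}, then produces a smooth local solution $g$ to the reduced system.

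The main technical step, which I expect to be the principal obstacle, is to show that the wave coordinate defect $V^\alpha=\Box_g x^\alpha$ vanishes identically, so that $g$ actually solves \eqref{EVE}. By the contracted Bianchi identity, $V^\alpha$ satisfies a linear homogeneous second-order hyperbolic system; the constraint equations \eqref{Hamconst}--\eqref{momconst} and an appropriate choice of lapse and shift at $t=0$ force $V^\alpha$ and $\partial_{x^0} V^\alpha$ to vanish initially. What is new for the IBVP is the need for boundary conditions on $V^\alpha$ along $\mathcal{T}$ making the resulting linear IBVP well-posed and admitting the trivial solution. Here the umbilic hypothesis plays a dual role: imposing $\chi=\lambda H$ provides Neumann-type boundary conditions for three of the four components of $V^\alpha$, while the key observation that the tangential momentum constraint (that is, the relevant projection of the Einstein equations onto $\mathcal{T}$) is automatically satisfied on an umbilic boundary supplies the fourth boundary condition. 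A standard energy estimate and uniqueness argument for the linear IBVP for $V^\alpha$ then yields $V^\alpha\equiv 0$, recovering a true solution of the Einstein vacuum equations.

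Finally, for geometric uniqueness, given any other solution $(\mathcal{M}',g')$ satisfying items \ref{th:d}--\ref{th:tg}, I would solve the wave map equation $\Box_{g'}\phi^\alpha=0$ with initial data matching the coordinate embedding on $\Sigma$ and boundary data adapted to the umbilic and angle conditions on the image of $\mathcal{T}$; the resulting diffeomorphism $\phi$ pulls $g'$ into wave gauge satisfying the same reduced IBVP (same initial data, same geometric boundary conditions) as $g$. Uniqueness for the reduced IBVP then yields an isometric embedding, and applying this construction to both $(\mathcal{M},g)$ and $(\mathcal{M}',g')$ provides the common extension required by the statement. The geometric nature of the umbilic condition, which is invariant under the wave map reparametrization, is exactly what makes this last step work.
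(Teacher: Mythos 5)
Your proposal follows essentially the same route as the paper: wave coordinates reducing the Einstein equations to a quasilinear wave IBVP with Robin boundary conditions induced by the umbilic condition, propagation of the gauge via a linear IBVP for the wave-gauge defect whose homogeneous boundary conditions come from the fact that the momentum constraint is automatically satisfied on an umbilic boundary (the paper's Lemma \ref{lem:momconst}), and geometric uniqueness from the gauge-invariant nature of the umbilic condition. The only slight discrepancy is in the bookkeeping of the gauge boundary conditions: in the paper the automatic momentum constraint yields the three Neumann conditions $\partial_x\Gamma_i=0$ for the components tangential to $\mathcal{T}$, while the fourth condition is the Dirichlet condition $\Gamma_x=0$, recovered by reversing the derivation of the imposed Robin condition on $g_{xx}$ --- your write-up swaps these roles, but this does not affect the validity of the argument.
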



%
\subsection{Approach to local well-posedness}

We set up the IBVP with umbilic boundary in a wave coordinates gauge. As we show in Proposition \ref{prop:rs}, under appropriate conditions on the coordinate system $(t,x^A,x)\in [0,T]\times U_A\times [0,\epsilon]$, the reduced IBVP with umbilic boundary for the components of the metric, in a neighborhood of a point on $\mathcal{S}$, takes the form (see Section \ref{subsec:not} for our index notation): 
\begin{align}\label{eq:rs.cond}
\left\{
\begin{array}{lllll}
g^{\alpha\beta}\partial_\alpha\partial_\beta g_{\mu\nu} =Q_{\mu\nu}(\partial g,\partial g),\vspace*{.1in}\\
g_{xt}=g_{xA}=0,\quad &\text{on $\mathcal{T}$},\\
\partial_xg_{ij}=2\lambda (g_{xx})^{\frac{1}{2}}g_{ij},\quad \partial_xg_{xx}=6\lambda(g_{xx})^\frac{3}{2},\quad&\text{on $\mathcal{T}$},\vspace*{.1in}\\
g_{00}=-\Phi^2,\quad g_{0p}=v_p,\quad g_{pq}=h_{pq},\quad \partial_0g_{pq}=-2\Phi k_{pq}+\mathcal{L}_vg_{ij},\quad&\text{on $\Sigma$},\\
\partial_0g_{00}= 2\Phi^3 \mathrm{tr} k-4v^p\Phi\partial_p\Phi+\Phi^2\partial^pv_p-\Phi^2g^{pq}\mathcal{L}_vg_{pq},\quad&\text{on $\Sigma$},\\
\partial_0g_{0p}=\frac{1}{2}\Phi^2h^{rs}(\partial_rh_{sp}+\partial_sh_{rp}-\partial_ph_{rs})-\Phi\partial_p\Phi+2v^q(\mathcal{L}_vg_{pq}-2\Phi k_{pq}-\partial_pv_q),\quad&\text{on $\Sigma$},
\end{array}
\right.
\end{align}
where $\Phi,v^p$ are the initial lapse and shift vector field, with $n=\Phi^{-1}(\partial_t-v)$ being the future unit normal to $\Sigma$, and $\mathcal{L}$ the Lie derivative operator. 
The initial conditions for the time derivatives of $g_{00},g_{0p}$ are such that the wave coordinates condition $\Gamma_\mu=0$ is valid on $\Sigma$. The first set of boundary conditions in \eqref{eq:rs.cond} is gauge, induced by our choice of coordinates (see Section \ref{subsec:coord}), while the second set of {\it Robin} type boundary conditions for $g_{ij},g_{xx}$ is induced by the $\lambda$-umbilic boundary condition \eqref{def:umbl}.
\begin{Remark}\label{rem:lapse}
In order for the initial configurations to be consistent with the $\lambda$-umbilic boundary condition \eqref{def:umbl} and the compatibility conditions (see Section \ref{se:cc}), we cannot choose the coordinates to be initially Gaussian, but we must instead allow for non-trivial initial lapse and shift vector field $\Phi,v^p:U_A\times[0,\epsilon]$. Indeed, in view of the boundary conditions, $\Phi,v_p=g_{pq}v^q$ must verify
\begin{align*}
\partial_xg_{00}=&\,2\lambda (g_{xx})^\frac{1}{2}g_{00}\qquad\Leftrightarrow\qquad \partial_x\Phi = 2\lambda(g_{xx})^\frac{1}{2}\Phi \qquad\text{on $\mathcal{S}$},\\
\partial_xg_{0p}=&\,2\lambda (g_{xx})^\frac{1}{2}g_{0p}\qquad\Leftrightarrow\qquad \partial_xv_p= 2\lambda(g_{xx})^\frac{1}{2}v_p \qquad\text{on $\mathcal{S}$}.
\end{align*}
Hence, after prescribing their values on $\mathcal{S}$, the boundary conditions and the higher order compatibility conditions of Section \ref{se:cc} imply that the derivatives of $\Phi,v_p$ are fixed at all orders on $\mathcal{S}$, if one wants to construct smooth solutions. Note that the value of $g_{03}=g_{xt}$ on $\mathcal{S}$ is forcibly zero so that the first boundary condition in \eqref{eq:rs.cond} is initially verified. 
In the totally geodesic case, $\lambda=0$, the Gaussian choice $\Phi=1,v^p=0$ on $\Sigma$ is permitted. 
\end{Remark}
\begin{Remark}
The IBVP \eqref{eq:rs.cond} contains $10$ boundary conditions for the $10$ metric components $g_{xi}, g_{ij},g_{xx}$, $i,j=0,1,2$, and we are a priori not allowed to impose additional conditions. In some previous works concerning the IBVP in wave coordinates, the wave coordinates condition $\Gamma_i=0$ was part of the boundary conditions, such that they could be propagated by the equations (see \cite{KRSW}). In our case, we do not impose $\Gamma_i=0$ on the boundary, instead, the boundary conditions for the propagation of the gauge are derived from the umbilic condition, see Section \ref{sec:gauge.prop}. 
\end{Remark}
Once we have formulated the IBVP for the Einstein vacuum equations with umbilic boundary as the reduced system \eqref{eq:rs.cond}, local well-posedness follows by standard theory and the domain of dependence property. Indeed, it amounts to solving \eqref{eq:rs.cond} locally near a point on the boundary $\mathcal{S}$. This is based on deriving standard energy estimates. The only part which requires some attention is how to treat the arising boundary integrals for the components $g_{ij},g_{xx}$ that satisfy Robin type boundary conditions. We recall how to derive such energy estimates for the wave equation with boundary in the appendix.

\subsection{Overview} 

In Section \ref{se:setup}, we present the set-up for the initial boundary problem with umbilic boundary. In particular, we introduce the wave coordinates that we will work with and the boundary conditions for the reduced system. In Section \ref{se:cc}, we review the derivation of the corner conditions for an umbilic boundary. Section \ref{sec:gauge.prop} is devoted to the recovery of the Einstein equations, and in particular, the derivation of the boundary conditions for the gauge from the umbilic boundary condition. The key observation here is that the momentum constraint for the umbilic timelike boundary $\mathcal{T}$ is automatically satisfied, see Lemma \ref{lem:momconst}. Finally, in Appendix \ref{ap:werbc}, we briefly review the energy estimates for the wave equation with Robin boundary conditions which lead to the well-posedness of the IBVP for the reduced system \eqref{eq:rs.cond}. 

After recovering the full Einstein vacuum equations \eqref{EVE} and the umbilic condition \eqref{def:umbl} from the solution to \eqref{eq:rs.cond}, see Propositions \ref{prop:rs}, \ref{prop:gauge}, the geometric uniqueness statement in Theorem \ref{thmA} follows in a straightforward way from the homogeneity of the umbilic condition \eqref{def:umbl} and its geometric nature, thus, completing the proof of Theorem \ref{thmA}. 

\begin{quote}
\textbf{Acknowledgements.} {\small We would like to thank M.T. Anderson for many interesting discussions on the subject, in particular, for forwarding us E.~Witten's question on the IBVP with umbilic boundary, which led us to write this paper. Both authors are supported by the \texttt{ERC grant 714408 GEOWAKI}, under the European Union's Horizon 2020 research and innovation program.}
\end{quote}

\subsection{Notations}\label{subsec:not}

We use Einstein's summation for repeated upper and lower indices.
Greek indices $\alpha,\beta,\gamma,\mu,\nu$ range over $\{0,1,2,3\}$ and refer to the spacetime coordinates $(t,x^A,x)$, $A=1,2$, with $x^0=t,x^3=x$. Latin indices $i,j,l,m$ range over $\{0,1,2\}$ and refer to the coordinates $(t,x^A)$. These are used for the components of tensors that are tangential to the timelike boundary $\mathcal{T}$. Latin indices $p,q,r$ range over $\{1,2,3\}$ and refer to the coordinates $(x^A,x)$ or $(y^A,y)$, which are also used for the components of tensors tangential to the initial hypersurface $\Sigma$.

\section{Setting up the IBVP with umbilic boundary in wave coordinates} \label{se:setup}
Let $(\mathcal{M},g)$ be an $3+1$ dimensional Lorentzian manifold with boundary $\mathcal{T}\cup\Sigma$, where $\Sigma$ is a Cauchy hypersurface in the sense of \cite{HFS} and $\mathcal{T}$ is a timelike boundary with outgoing unit normal $N$. We assume that the boundary $\mathcal{T}$ is umibilic, i.e. the second fundamental form $\chi$ of $\mathcal{T}$ verifies \eqref{def:umbl}, for some constant $\lambda\in \mathbb{R}$. In particular, the mean curvature of $\mathcal{T}$ is constant 
\begin{align*}
\mathrm{tr} \chi= 3 \lambda,
\end{align*}
where the trace is taken relative to the induced metric $H$ on $\mathcal{T}$. 
We assume for simplicity that $\Sigma$ is orthogonal to $\mathcal{T}$. The general case where $\Sigma$ intersects $\mathcal{T}$ at an arbitrary angle $\omega$ can be reduced to the orthogonal one by first solving the standard initial value problem in a domain of dependence region and then choosing a new Cauchy slice orthogonal to the boundary, see \cite[Section 3.2]{FSm2020} for an example of such a reduction.

\subsection{The choice of coordinates}\label{subsec:coord}

Recall that on a globally hyperbolic manifold with boundary (see \cite{HFS}), the wave equation 
$$
\square_g \psi=0,
$$
with initial data on $\Sigma$ and either Dirichlet or Neumann boundary conditions on $\mathcal{T}$ is well-posed. This follows from standard energy estimates and the domain of dependence property. 

Let $y^A,y$, $A=1, 2$ be local coordinates on the Cauchy hypersurface $\Sigma$ defined in a neighborhood of a point on $\partial \Sigma$, such that $y$ is a boundary defining function of $\partial \Sigma$:
\begin{align*}
y \ge 0,\qquad \{y=0\} =\partial \Sigma.
\end{align*}
Also, let $\epsilon > 0$ be small enough such that the range of $(y^A,y)$ contains a set $U_A\times [0, \epsilon]$, where $U_A$ is some open set of $\mathbb{R}^2$. 

In $[0,T]\times U_A\times [0, \epsilon]\subset\mathcal{M}$, we consider coordinates $(x^\alpha)=(t,x^A,x)$, $A=1,2$, such that 

\begin{itemize}
	\item The wave coordinates condition is verified 
\begin{align*}	
	\square_g x^\gamma=0\qquad\Leftrightarrow\qquad g^{\alpha\beta}\Gamma_{\alpha\beta}^\gamma=0.
	\end{align*}
	\item $t,x^A$ verify the homogogeneous Neumann boundary conditions $N(t)=N(x^A)=0$ on $\mathcal{T}$. 
	\item $x$ verifies the Dirichlet boundary condition $x=0$ on $\mathcal{T}$. 
	\item On the initial hypersurface $\Sigma$, $t=0$ and $(x^A,x)=(y^A,y)$. 
	\item With respect to the future normal $n$ to $\Sigma$, $n(t)=\Phi^{-1}$ and $n(x^A)=-\Phi^{-1}v^A$, and $n(x)=-\Phi^{-1}v^3$ on $\Sigma$, where $\Phi,v^p:U_A\times [0, \epsilon]\to\mathbb{R}$ are functions of our choice, consistent with $\mathcal{T}$ being $\lambda$-umbilic, see Remark \ref{rem:lapse}. 
\end{itemize}

Each coordinate function $t, x^A,x$, $A=1,2$ verifies a wave equation with either homogeneous Dirichlet or Neumann boundary data and so they are well-defined. Moreover, at least in a sufficiently small spacetime neighborhood of the corner $\partial\Sigma$, $x$ is a boundary definining function of $\mathcal{T}$, and $(t,x^A,x)$ forms a local coordinate system. 

\subsection{The reduced equations with the induced initial and boundary conditions}

Recall that for any Lorentzian manifold $(\mathcal{M},g)$, 
\begin{align}\label{Ric.exp}
\mathrm{Ric}_{\mu \nu}(g)=- \frac{1}{2}g^{\alpha\beta}\partial_\alpha\partial_\beta g_{\mu \nu} +\frac{1}{2} Q_{\mu \nu} (\partial g, \partial g)+ \frac{1}{2}\left( \partial_\mu \Gamma_\nu + \partial_\nu \Gamma_\mu \right), 
\end{align}
where $Q_{\mu \nu}(\partial g, \partial g)$ is a quadratic form in $\partial g$, with coefficients that are rational functions of $g$, and  
\begin{equation}\label{eq:gu}
\Gamma_{\mu}=g_{\mu \gamma} g^{\alpha \beta} \Gamma^\gamma_{\alpha \beta}. 
\end{equation}
\begin{Proposition}\label{prop:rs}
Consider the region $[0,T] \times [0, \epsilon] \times U_A\subset\mathcal{M}$, covered by the wave coordinates $(t,x,x^A)$, for some $T >0$, as defined in Section \ref{subsec:coord}. Then the IBVP for the Einstein vacuum equations with $\lambda$-umbilic boundary reduces to the system of equations and initial/boundary conditions \eqref{eq:rs.cond}. 

Moreover, given a solution to \eqref{eq:rs.cond}, it follows that $\{x=0\}=\mathcal{T}$ is $\lambda$-umbilic and the wave coordinates condition for the coordinate $x$, $\Gamma_x=0$, is verified on the boundary. 
\end{Proposition}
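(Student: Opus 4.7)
The plan is to derive each equation in \eqref{eq:rs.cond} from its underlying geometric input: the bulk system from $\mathrm{Ric}(g)=0$ and the wave gauge, the first boundary line from the coordinate choice on $\mathcal{T}$, the two Robin conditions from the umbilic identity together with the requirement $\Gamma_x=0$ on $\mathcal{T}$, and the initial conditions from the ADM decomposition plus the demand that the wave gauge $\Gamma_\mu=0$ be valid on $\Sigma$. For the bulk equation I would simply substitute $\Gamma_\mu=0$ into \eqref{Ric.exp}, which immediately yields $g^{\alpha\beta}\partial_\alpha\partial_\beta g_{\mu\nu}=Q_{\mu\nu}(\partial g,\partial g)$.

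For the first line of boundary conditions I would observe that $\{x=0\}=\mathcal{T}$ makes $dx$ conormal to $\mathcal{T}$, so the outward unit normal is $N^\mu=g^{\mu x}/\sqrt{g^{xx}}$. The Neumann conditions $N(t)=N(x^A)=0$ imposed in Section \ref{subsec:coord} then read $g^{tx}|_\mathcal{T}=g^{Ax}|_\mathcal{T}=0$; inverting the resulting block-diagonal $g^{\mu\nu}$ on $\mathcal{T}$ gives the equivalent covariant identities $g_{xt}=g_{xA}=0$, and in particular $g^{xx}|_\mathcal{T}=1/g_{xx}$.

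Next I would compute the second fundamental form on $\mathcal{T}$ directly. Using block-diagonality together with $g_{jx}|_\mathcal{T}=0$ for $j\in\{0,1,2\}$ (so that $\partial_i g_{jx}|_\mathcal{T}=0$ for tangential $i$), one finds $\Gamma^x_{ij}|_\mathcal{T}=-\frac{1}{2}g^{xx}\partial_x g_{ij}$, and since $\chi_{ij}=-\Gamma^x_{ij}/\sqrt{g^{xx}}$ for the normal constructed above, $\chi_{ij}=\frac{1}{2}\sqrt{g^{xx}}\,\partial_x g_{ij}$. The umbilic condition $\chi_{ij}=\lambda g_{ij}$ then produces the first Robin condition $\partial_x g_{ij}=2\lambda\sqrt{g_{xx}}\,g_{ij}$. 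The second Robin condition on $g_{xx}$ is \emph{not} a direct consequence of umbilicity; rather, it encodes the requirement $\Gamma_x|_\mathcal{T}=0$. Expanding $\Gamma_x=g^{\alpha\beta}(\partial_\alpha g_{\beta x}-\frac{1}{2}\partial_x g_{\alpha\beta})$ on $\mathcal{T}$ using block-diagonality gives $\Gamma_x|_\mathcal{T}=-\frac{1}{2}g^{ij}\partial_x g_{ij}+\frac{1}{2}g^{xx}\partial_x g_{xx}$; plugging in the first Robin condition and using $g^{ij}g_{ij}=3$ reduces this to $-3\lambda\sqrt{g_{xx}}+\frac{1}{2}g^{xx}\partial_x g_{xx}$, which vanishes precisely when $\partial_x g_{xx}=6\lambda(g_{xx})^{3/2}$.

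The initial conditions on $\Sigma$ amount to a standard ADM unpacking: $g_{00}=-\Phi^2$, $g_{0p}=v_p$ and $g_{pq}=h_{pq}$ reflect the definition of the lapse and shift, and $\partial_0 g_{pq}=-2\Phi k_{pq}+\mathcal{L}_v g_{pq}$ follows from $k_{pq}=-\frac{1}{2}(\mathcal{L}_n g)_{pq}$ with $n=\Phi^{-1}(\partial_t-v)$; the formulas for $\partial_0 g_{00}$ and $\partial_0 g_{0p}$ are then obtained by solving the four scalar equations $\Gamma_\mu|_\Sigma=0$ for these four unknowns. For the \emph{moreover} direction, the same algebra runs in reverse: the first boundary line of \eqref{eq:rs.cond} restores block-diagonality on $\mathcal{T}$, so the $\chi_{ij}$ formula above remains valid and the first Robin condition forces $\chi_{ij}=\lambda g_{ij}$, i.e., $\mathcal{T}$ is $\lambda$-umbilic; the second Robin condition then forces $\Gamma_x|_\mathcal{T}=0$ by the same algebraic identity. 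The main bookkeeping subtlety I expect is tracking signs and factors in the traces $\Gamma^x_{ij}|_\mathcal{T}$ and $\Gamma_x|_\mathcal{T}$, but no genuine analytic obstacle arises at this step.
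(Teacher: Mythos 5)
Your proposal is correct and follows essentially the same route as the paper: reduce $\mathrm{Ric}(g)=0$ via \eqref{Ric.exp} with $\Gamma_\mu=0$, read off $g_{xt}=g_{xA}=0$ from the Neumann conditions on $t,x^A$, compute $\chi_{ij}=\frac{1}{2}(g_{xx})^{-\frac{1}{2}}\partial_x g_{ij}$ to obtain the first Robin condition from \eqref{def:umbl}, derive the $g_{xx}$ condition by evaluating $\Gamma_x=0$ on $\mathcal{T}$ and using $g^{ij}g_{ij}=3$, take the initial conditions from the ADM decomposition and $\Gamma_\mu|_\Sigma=0$, and note that the boundary derivations are reversible for the converse. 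The only nitpick is that with $x\ge 0$ in the interior the outgoing unit normal is $N=-(g_{xx})^{-\frac{1}{2}}\partial_x$ rather than $+g^{\mu x}/\sqrt{g^{xx}}$; your subsequent formula $\chi_{ij}=-\Gamma^x_{ij}/\sqrt{g^{xx}}$ already carries the correct sign, so your final conditions coincide with the paper's.
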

\begin{proof}
In view of \eqref{Ric.exp} and our choice of coordinates in Section \ref{subsec:coord}, the Einstein vacuum equations \eqref{EVE} reduce to the system of wave equations:
\begin{eqnarray}\label{eq:rs}
g^{\alpha\beta}\partial_\alpha\partial_\beta g_{\mu \nu}=  Q_{\mu \nu}(\partial g, \partial g).
\end{eqnarray}

\pfstep{\bf Initial data for the metric} From our choice of initial data for the wave coordinates, the following initial conditions hold
\begin{align}\label{init.metric}
g_{00}=-\Phi^2, \qquad g_{0p}=v_p, \qquad
g_{pq}=h_{pq}, \qquad \partial_t g_{pq}= -2\Phi k_{pq}+\mathcal{L}_vg_{pq},\qquad\text{on $\Sigma$},
\end{align}
where $h_{pq}$, $k_{pq}$ are the components of the first and second fundamental form of $\Sigma$ in the $(y^A,y)$ coordinate system. 

The initial conditions for $\partial_t g_{0\mu}$ are derived, as usual, from the fact that the wave coordinates condition $\Gamma_\mu=0$ is verified on $\Sigma$.  This gives
\begin{align}\label{init.metric.2}
\begin{split}
	\partial_tg_{00}=&\, 2\Phi^3 \mathrm{tr} k-4v^p\Phi\partial_p\Phi+\Phi^2\partial^pv_p-\Phi^2g^{pq}\mathcal{L}_vg_{pq} \qquad\Leftrightarrow\qquad \Gamma_0=0,\qquad\text{on $\Sigma$},\\ 
	\partial_tg_{0p}=&\, \frac{1}{2}\Phi^2h^{rs}(\partial_rh_{sp}+\partial_sh_{rp}-\partial_ph_{rs})-\Phi\partial_p\Phi+2v^q(\mathcal{L}_vg_{pq}-2\Phi k_{pq}-\partial_pv_q)\;\Leftrightarrow\; \Gamma_p=0,\;\text{on $\Sigma$}.
	\end{split}
	\end{align}
	
\pfstep{\bf Boundary conditions for the shift vector field} From the definition of the coordinates in Section \ref{subsec:coord}, we have the boundary conditions
\begin{equation}\label{gxi.bdcond}
g_{xt}=0, \qquad g_{xA}=0,\qquad\text{on $\mathcal{T}$},
\end{equation}
which amounts to saying that the shift vector field $g^{xi}\partial_i$ of the constant $x$-hypersurfaces is normal to the boundary $\mathcal{T}=\{x=0\}$.

\pfstep{\bf Boundary conditions for umbilic boundary}
The induced metric on the boundary $\mathcal{T}$ takes the form
\begin{align*}
H= g_{tt}dt^2+ 2 g_{t A} dx^A dt + g_{AB}dx^A dx^B,
\end{align*}
and the components of the second fundamental form are given by
\begin{align*}
\chi_{ij}= g(D_{\partial_i} \partial_j , N), 
\end{align*}
where $N=-(g_{xx})^{-\frac{1}{2}}\partial_x$ is the outgoing unit normal to $\mathcal{T}$. 

At $x=0$, we have 
\begin{align*}
\chi_{ij}=-\frac{1}{2}(g_{xx})^{-\frac{1}{2}}(\partial_ig_{jx}+\partial_jg_{ix}-\partial_xg_{ij})=\frac{1}{2}(g_{xx})^{-\frac{1}{2}}\partial_xg_{ij}.
\end{align*}
Hence, the $\lambda$-umbilic condition \eqref{def:umbl} implies the Robin type boundary conditions
\begin{align}\label{gij.bdcond}
\partial_xg_{ij} = 2\lambda (g_{xx})^{\frac{1}{2}} g_{ij},\qquad\text{on $\mathcal{T}$}.
\end{align}
In turn, the wave coordinates condition  $\Gamma_x=0$, evaluated along the boundary, gives 
\begin{align*}
0=g_{x\gamma}g^{ij} \Gamma_{ij}^\gamma + g_{x\gamma}g^{xx} \Gamma_{xx}^\gamma=-\frac{1}{2} g^{ij}\partial_xg_{ij}+\frac{1}{2}g^{xx}\partial_xg_{xx}=-3\lambda (g_{xx})^\frac{1}{2}+\frac{1}{2}g^{xx}\partial_xg_{xx}.
\end{align*}
Hence, we obtain another Robin type boundary condition for $g_{xx}$:
\begin{align}\label{gxx.bdcond}
\partial_xg_{xx}= 6 \lambda (g_{xx})^{\frac{3}{2}}, \qquad\text{on $\mathcal{T}$}.
\end{align}

Notice that the derivations for \eqref{gxi.bdcond}, \eqref{gxx.bdcond} are reversible, for a solution to \eqref{eq:rs.cond}. Therefore, the last part of the proposition is also confirmed.
\end{proof}

\section{Higher order compatibility conditions with umbilic boundary} \label{se:cc}

Apart from the boundary conditions \eqref{gxi.bdcond}, \eqref{gij.bdcond}, \eqref{gxx.bdcond}, the initial data \eqref{init.metric}, \eqref{init.metric.2} must also satisfy compatibility conditions at all higher orders on the boundary of the initial hypersurface, $\partial\Sigma=\mathcal{S}$. These are found by iteratively applying the differential operator \eqref{eq:rs} to the boundary conditions and evaluating the resulting expression at $x=0$. We first illustrate the procedure in an abstract way. Let $\mathcal{BC}$ be the boundary condition operator for $g_{\mu\nu}$, ie. $\mathcal{BC}g_{\mu\nu}=0$. Then we compute 
\begin{align*}
g^{\alpha\beta}\partial_\alpha\partial_\beta(\mathcal{BC}g_{\mu\nu})=[g^{\alpha\beta}\partial_\alpha\partial_\beta,\mathcal{BC}]g_{\mu\nu}+\mathcal{BC}(g^{\alpha\beta}\partial_\alpha\partial_\beta g_{\mu\nu})=[g^{\alpha\beta}\partial_\alpha\partial_\beta,\mathcal{BC}]g_{\mu\nu}+\mathcal{BC}[Q_{\mu\nu}(\partial g,\partial g)]
\end{align*}
On the other hand, expanding the wave operator in the LHS, evaluating the resulting expression at $x=0$, and using the boundary condition for $g_{\mu\nu}$ gives
\begin{align*}
g^{\alpha\beta}\partial_\alpha\partial_\beta(\mathcal{BC}g_{\mu\nu})\overset{\eqref{gxi.bdcond}}{=}g^{xx}\partial_x\partial_x(\mathcal{BC}g_{\mu\nu})+g^{ij}\partial_i\partial_j(\mathcal{BC}g_{\mu\nu})=g^{xx}\partial_x\partial_x(\mathcal{BC}g_{\mu\nu}),\qquad\text{at $x=0$}, 
\end{align*}
since $\partial_i$, $\partial_j$ are tangential to the boundary. 
Hence, the next order compatibility condition for the initial datum of $g_{\mu\nu}$ reads:
\begin{align*}
\partial_x\partial_x(\mathcal{BC}g_{\mu\nu})=g_{xx}[g^{\alpha\beta}\partial_\alpha\partial_\beta,\mathcal{BC}]g_{\mu\nu}+g_{xx}\mathcal{BC}[Q_{\mu\nu}(\partial g,\partial g)],\qquad\text{at $x=0$}.
\end{align*}
Setting $\mathcal{BC}^{(n)}g_{\mu\nu}=0$ to be the compatibility condition of $g_{\mu\nu}$ of order $n$, $\mathcal{BC}^{(1)}:=\mathcal{BC}$, and repeating the above procedure, we obtain
\begin{align}\label{gmunu.comp.n}
\mathcal{BC}^{(n+1)}g_{\mu\nu}:=\partial_x\partial_x(\mathcal{BC}^{(n)}g_{\mu\nu})-H_{xx}[g^{\alpha\beta}\partial_\alpha\partial_\beta,\mathcal{BC}^{(n)}]g_{\mu\nu}-H_{xx}\mathcal{BC}^{(n)}[Q_{\mu\nu}(\partial g,\partial g)]=0,
\end{align}
as the $n+1$ order compatibility condition at $x=0$.

Note that we can compute all derivatives of $g_{\mu\nu}$ on $\Sigma$, in terms of the initial conditions \eqref{init.metric}, \eqref{init.metric.2}, using the reduced equations \eqref{eq:rs}, so the above conditions are well-defined at the level of the initial data.

As an example, the boundary condition \eqref{gxi.bdcond} for the metric components $g_{xi}$ results in the following compatibility conditions at $x=0$:
\begin{align*}
\partial_x\partial_x g_{xi}=&\,g_{xx}Q_{xi}(\partial g,\partial g),\\
\partial_x\partial_x\partial_x\partial_xg_{xi}=&-H_{xx}g^{lm}\partial_l\partial_m[g_{xx}Q_{xi}(\partial g,\partial g)]+g_{xx}[g^{\alpha\beta}\partial_\alpha\partial_\beta,\partial_x\partial_x]g_{xi}+g_{xx}\partial_x\partial_x[Q_{xi}(\partial g,\partial g)]\\
&\text{etc.}
\end{align*}
%
For the Robin type boundary conditions \eqref{gij.bdcond}, \eqref{gxx.bdcond}, at $x=0$, we have
\begin{align*}
\partial_x\partial_x\partial_x  g_{ij}=&\,\partial_x\partial_x(2\lambda g_{xx}^\frac{1}{2}g_{ij})-H_{xx}\partial_x(g^{\alpha\beta})\partial_\alpha\partial_\beta g_{ij}+g_{xx}\partial_x[Q_{ij}(\partial g,\partial g)]-2\lambda g_{xx} g^{\alpha\beta}\partial_\alpha\partial_\beta(g_{xx}^\frac{1}{2}g_{ij})\\
\partial_x\partial_x\partial_x  g_{xx}=&\,\partial_x\partial_x(6\lambda g_{xx}^\frac{3}{2})-H_{xx}\partial_x(g^{\alpha\beta})\partial_\alpha\partial_\beta g_{xx}+g_{xx}\partial_x[Q_{xx}(\partial g,\partial g)]-6\lambda g_{xx} g^{\alpha\beta}\partial_\alpha\partial_\beta(g_{xx}^\frac{3}{2})\\
&\text{etc.}
\end{align*}
The above RHSs could be further simplified by plugging in \eqref{eq:rs} whenever the wave operator $g^{\alpha\beta}\partial_\alpha\partial_\beta$ acts on a single metric component, after applying chain rule to each term. 

%
\subsection{Compatibility conditions at the level of the geometric initial data}

The higher order compatibility conditions \eqref{gmunu.comp.n}, for the reduced IBVP with umbilic boundary \eqref{eq:rs.cond}, translate to compatibility conditions for $h,k$ on $\mathcal{S}$, by using the form of the initial configurations \eqref{init.metric}, \eqref{init.metric.2}. According to Remark \ref{rem:lapse}, purely spatial derivatives of $g_{0\beta}$ can be written as functions of $h_{xx}$ and its derivatives. If additionally a time derivative acts on $g_{0\beta}$, then the corresponding term can be expressed in terms of $h_{xx},k_{pq}$ and their derivatives.
This way \eqref{gmunu.comp.n} can be rewritten purely in terms of the components of the geometric initial data for the Einstein vacuum equations, and their derivatives at all orders, expressed in the coordinate system $(x^A,x)$ or $(y^A,y)$.

\section{Propagation of the gauge}\label{sec:gauge.prop}

Consider a solution to the reduced IBVP \eqref{eq:rs.cond}. According to Proposition \ref{prop:rs}, $\{x=0\}=\mathcal{T}$ is $\lambda$-umbilic and the wave coordinates condition for $x$ is verified on the boundary:
\begin{align}\label{wave.x}
\Gamma_x=0,\qquad\text{on $\mathcal{T}$}.
\end{align}

\subsection{The momentum constraint}

The following well-known fact, which follows directly from the umbilic condition \eqref{def:umbl}, is key to the propagation of the wave gauge. Indeed, it holds irrespectively of $(\mathcal{M},g)$ verifying the Einstein vacuum equations, and thus, it is applicable to the solutions obtained from the reduced IBVP \eqref{eq:rs.cond}. 
\begin{lemma} \label{lem:momconst}
The momentum constraint 
\begin{align*}
\partial_i\mathrm{tr}\chi -\nabla^j\chi_{ji}=0\qquad\Leftrightarrow\qquad \mathrm{Ric}_{\alpha i}(g)N^\alpha=0,
\end{align*}
is automatically satisfied for an umbilic boundary. 
\end{lemma}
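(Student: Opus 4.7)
The plan is to recognize the claimed identity as nothing more than the contracted Codazzi--Mainardi equation for the hypersurface $\mathcal{T}$, combined with the very rigid structure of the second fundamental form forced by umbilicity. Recall that for any hypersurface with induced metric $H$, second fundamental form $\chi$ and unit normal $N$, the Codazzi equation reads
\begin{equation*}
\mathrm{Ric}_{\alpha i}(g) N^\alpha = \nabla^j \chi_{ji} - \partial_i \mathrm{tr}\chi,
\end{equation*}
where $\nabla$ is the Levi--Civita connection of $H$ on $\mathcal{T}$ and the trace is taken with respect to $H$. This is a purely geometric identity, valid regardless of whether $g$ solves the Einstein equations, so it applies to any Lorentzian manifold. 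Thus it suffices to show that both $\partial_i \mathrm{tr}\chi$ and $\nabla^j \chi_{ji}$ vanish under the assumption \eqref{def:umbl}.

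Now I substitute $\chi_{ij} = \lambda H_{ij}$ with $\lambda \in \mathbb{R}$ constant. Taking the $H$-trace gives $\mathrm{tr}\chi = \lambda H^{ij} H_{ij} = 3\lambda$, which is a spacetime constant on $\mathcal{T}$, so $\partial_i \mathrm{tr}\chi = 0$. For the divergence term, I use metric compatibility of the Levi--Civita connection, $\nabla_k H_{ij} = 0$, together with the constancy of $\lambda$, to obtain
\begin{equation*}
\nabla^j \chi_{ji} = \nabla^j(\lambda H_{ji}) = \lambda \, \nabla^j H_{ji} = 0.
\end{equation*}
Together these imply $\mathrm{Ric}_{\alpha i}(g) N^\alpha = 0$, which is precisely the momentum constraint along $\mathcal{T}$.

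There is no real obstacle here: the only subtlety is making sure that the Codazzi identity is stated in the correct contracted form and that the connection in the divergence $\nabla^j \chi_{ji}$ is the intrinsic connection of $(\mathcal{T}, H)$, so that $\nabla H = 0$ can be invoked. Once this is set up, the umbilic condition does all the work because it reduces both the trace and the divergence of $\chi$ to derivatives of $H$ and of the constant $\lambda$, both of which vanish. The statement is thus a one--line consequence of Codazzi and metric compatibility, and no additional analytic input is needed.
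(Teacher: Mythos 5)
Your proof is correct and follows essentially the same route as the paper: the twice contracted Codazzi equation reduces the normal-tangential Ricci component to $\partial_i\mathrm{tr}\chi$ and $\nabla^j\chi_{ji}$, both of which vanish because $\chi=\lambda H$ with $\lambda$ constant and $\nabla H=0$. (The overall sign you use for the Codazzi identity differs from the paper's convention, but this is immaterial since both terms vanish.)
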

\begin{proof}
It is a direct consequence of the twice contracted Codazzi equations, restricted to the boundary:
\begin{align*}
\mathrm{Ric}_{\alpha i}(g)N^\alpha= \partial_i\mathrm{tr}\chi -\nabla^j\chi_{ji},\qquad\text{on $\mathcal{T}$},
\end{align*}
where $\nabla$ is the Levi-Civita connection of $H$. Since $\chi=\lambda H$, for some $\lambda\in \mathbb{R}$, $\text{tr}\chi$ is constant and $\nabla$ annihilates $\chi$, confirming the conclusion. 
\end{proof}

\subsection{An IBVP for $\Gamma_\mu$ with homogeneous Dirichlet/Neumann conditions}

In order to prove that a solution to the reduced IBVP \eqref{eq:rs.cond}, we need to propagate the wave gauge. For the initial value problem, this is done by observing that the components $\Gamma_\mu$, given by \eqref{eq:gu}, satisfy a system of wave equations with trivial initial data. For the IBVP, to propagate the vanishing of $\Gamma_\mu$, homogeneous boundary conditions are also required. Since we are not allowed to impose any further boundary conditions, these will have to be induced from the umbilic condition \eqref{def:umbl}.
\begin{Proposition}\label{prop:gauge}
Let $g$ be a solution to the reduced IBVP \eqref{eq:rs.cond} in $[0,T]\times U_A\times[0,\epsilon]$. Then the components $\Gamma_\mu$ satisfy the IBVP
\begin{align}\label{eq:Gamma.mu}
\left\{
\begin{array}{lll}
g^{\alpha\beta}\partial_\alpha\partial_\beta \Gamma_\mu + L^{\alpha\beta}_{\mu}(\partial g)\partial_\alpha\Gamma_\beta=0,\quad &\text{in $[0,T]\times U_A\times [0,\epsilon]$},\vspace*{.1in}\\
\Gamma_x=\partial_x\Gamma_i=0,\quad &\text{on $\mathcal{T}$},\vspace*{.1in}\\
\Gamma_\mu=\partial_0\Gamma_\mu=0,\quad&\text{on $\Sigma$},
\end{array}
\right.
\end{align}
where $L^{\alpha\beta}_{\mu}(\partial g)$ is linear in $\partial g$. 
In particular, the $\Gamma_\mu$'s vanish in a domain of dependence region.   
\end{Proposition}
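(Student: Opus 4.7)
The plan is to break the statement into four pieces: the wave equation for $\Gamma_\mu$, the vanishing initial data on $\Sigma$, the homogeneous boundary conditions on $\mathcal{T}$, and the final uniqueness argument. The first and second pieces are essentially the standard wave-gauge propagation argument for the Cauchy problem; the third piece, which is the real novelty here, comes from combining Lemma \ref{lem:momconst} with the algebraic identity between Ricci and $\Gamma$ furnished by the reduced system.

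First, from \eqref{Ric.exp} and the reduced equations in \eqref{eq:rs.cond} one gets the identity
\be
\mathrm{Ric}_{\mu\nu}(g)=\tfrac{1}{2}(\partial_\mu\Gamma_\nu+\partial_\nu\Gamma_\mu),
\ee
valid throughout $[0,T]\times U_A\times[0,\epsilon]$. Feeding this into the twice contracted Bianchi identity $\nabla^\mu(\mathrm{Ric}_{\mu\nu}-\tfrac{1}{2}Rg_{\mu\nu})=0$ and commuting the covariant derivative past $\partial$, the principal part becomes $g^{\alpha\beta}\partial_\alpha\partial_\beta\Gamma_\mu$ while the remaining terms are linear in $\partial\Gamma$ with coefficients depending on $\partial g$, producing the equation displayed in \eqref{eq:Gamma.mu} with an appropriate $L^{\alpha\beta}_\mu(\partial g)$. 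For the initial conditions, $\Gamma_\mu|_\Sigma=0$ is immediate: the initial data \eqref{init.metric.2} for $\partial_t g_{0\mu}$ was prescribed precisely so that $\Gamma_\mu=0$ on $\Sigma$. To obtain $\partial_0\Gamma_\mu|_\Sigma=0$ one uses the constraint equations \eqref{Hamconst}--\eqref{momconst}, which translate into $G_{0\mu}|_\Sigma=0$; combined with the above Ricci-$\Gamma$ identity and with $\Gamma_\mu|_\Sigma=0$ (so all spatial derivatives $\partial_p\Gamma_\mu$ vanish on $\Sigma$), the relations $G_{0\mu}=\frac{1}{2}(\partial_0\Gamma_\mu+\partial_\mu\Gamma_0)-\frac{1}{2}g^{\alpha\beta}\partial_\alpha\Gamma_\beta g_{0\mu}$ form a $4\times 4$ linear system for $\partial_0\Gamma_\mu|_\Sigma$ whose coefficient matrix is invertible by non-degeneracy of $g$, forcing $\partial_0\Gamma_\mu|_\Sigma=0$.

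For the boundary, $\Gamma_x|_\mathcal{T}=0$ is exactly the content of the last sentence of Proposition \ref{prop:rs}, which was derived by combining the Robin condition \eqref{gij.bdcond} for $g_{ij}$ with \eqref{gxx.bdcond} for $g_{xx}$. The Neumann-type condition $\partial_x\Gamma_i|_\mathcal{T}=0$ is where the umbilic assumption is crucial: by Lemma \ref{lem:momconst}, the momentum constraint $\mathrm{Ric}_{\alpha i}(g)N^\alpha=0$ holds on $\mathcal{T}$. Since $g_{xi}=0$ on $\mathcal{T}$ by \eqref{gxi.bdcond}, the normal $N$ has only an $x$-component, so $\mathrm{Ric}_{\alpha i}N^\alpha=-(g_{xx})^{-1/2}\mathrm{Ric}_{xi}$; using the Ricci-$\Gamma$ identity and the fact that $\Gamma_x$ vanishes along $\mathcal{T}$ (hence its tangential derivatives $\partial_i\Gamma_x$ vanish on $\mathcal{T}$), one obtains
\be
0=\mathrm{Ric}_{\alpha i}N^\alpha=-\tfrac{1}{2}(g_{xx})^{-1/2}\partial_x\Gamma_i \quad\text{on }\mathcal{T},
\ee
which is exactly $\partial_x\Gamma_i|_\mathcal{T}=0$. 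This yields $3+1=4$ boundary conditions (three Neumann and one Dirichlet) matching the four unknowns $\Gamma_\mu$.

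Finally, with \eqref{eq:Gamma.mu} established, the system is a linear second-order hyperbolic system for $\Gamma_\mu$ with vanishing Cauchy data and homogeneous mixed Dirichlet/Neumann boundary data, so standard energy estimates of the type reviewed in Appendix \ref{ap:werbc} (Dirichlet and Neumann being limiting cases of Robin) yield $\Gamma_\mu\equiv 0$ in a domain of dependence region. The main conceptual step is the Neumann condition $\partial_x\Gamma_i=0$; the obstacle it removes is precisely what would prevent the wave-gauge approach from closing without extra boundary assumptions, and it is resolved here for free because the umbilic condition is geometric enough to automatically enforce the momentum constraint on $\mathcal{T}$.
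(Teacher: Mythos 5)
Your proposal is correct and follows essentially the same route as the paper: the identity $\mathrm{Ric}_{\mu\nu}=\tfrac12(\partial_\mu\Gamma_\nu+\partial_\nu\Gamma_\mu)$ from the reduced equations, the contracted Bianchi identity for the wave system, the constraints for the vanishing Cauchy data, and—crucially—Lemma \ref{lem:momconst} together with $\Gamma_x|_{\mathcal{T}}=0$ to extract the Neumann condition $\partial_x\Gamma_i=0$. You in fact supply more detail than the paper at two points (the invertible linear system for $\partial_0\Gamma_\mu$ and the tangential-derivative step $\partial_i\Gamma_x=0$); the only nitpick is that with a nonzero shift the constraints give $G(n,\cdot)|_\Sigma=0$ rather than literally $G_{0\mu}|_\Sigma=0$, which does not affect the argument.
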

\begin{proof}
Given a solution to the reduced equations \eqref{eq:rs}, from the formula \eqref{Ric.exp} of the Ricci tensor, we have
\begin{eqnarray}
\mathrm{Ric}_{\mu\nu}(g)=\partial_\mu \Gamma_\nu + \partial_\nu \Gamma_\mu, \label{eq:rt}
\end{eqnarray}
where $\Gamma_\mu$ is defined via \eqref{eq:gu}. 
Taking the divergence of \eqref{eq:rt} with respect to $\mu$ and using the twice contracted second Bianchi idendity, we obtain a linear system of wave equations for $\Gamma_\mu$ as in \eqref{eq:Gamma.mu}.

By virtue of the initial conditions \eqref{init.metric.2}, and the validity of the constraint equations \eqref{Hamconst}-\eqref{momconst} initially, it holds $\Gamma_\mu=\partial_t\Gamma_\mu=0$ on $\Sigma$. 

On the other hand, in view of the umbilic condition and Lemma \ref{lem:momconst}, $\mathrm{Ric}_{xi}(g)=0$ on the boundary, since $N=-(g_{xx})^{-\frac{1}{2}}\partial_x$. Thus, using the validity of \eqref{wave.x}, the identity \eqref{eq:rt} with indices $x$ and $i$ gives the homogeneous Neumann boundary conditions:
\begin{align*}
\partial_x \Gamma_i = 0,\qquad\text{on $\mathcal{T}$},
\end{align*}
as asserted.
\end{proof}

\appendix

\section{The wave equation with Robin boundary conditions} \label{ap:werbc}

Consider the following IBVP for the linear wave equation
\begin{eqnarray} \label{eq:tp}
\left\{
\begin{array}{lll}
\square_g \phi&=&0, \quad \\
 N(\phi)&=& R \phi,  \quad \mathrm{on\,\,} \mathcal{T},  \\
\phi &=& f_0, \quad n(\phi)= f_1, \quad  \mathrm{on\,\,} \Sigma.
\end{array}
\right.
\end{eqnarray}
on a globally hyperbolic manifold $(\mathcal{M},g)$, having a Cauchy hypersurface $\Sigma$ with future unit normal $n=\Phi^{-1}(\partial_t-v)$ and a timelike boundary $\mathcal{T}$ with outgoing unit normal $N$. 

In the application to \eqref{eq:rs.cond}, the coefficient $R$ appearing in the boundary conditions is a given power of $g_{xx}$, as in \eqref{gij.bdcond}, \eqref{gxx.bdcond}, since $N=-(g_{xx})^{-\frac{1}{2}}\partial_x$. Note that the covariant wave operator $\square_g=g^{\alpha\beta}\partial_\alpha\partial_\beta-g^{\alpha\beta}\Gamma_{\alpha\beta}^\gamma\partial_\gamma$ does not change the form of the reduced equations \eqref{eq:rs}. We merely consider it here for convenience in employing the divergence theorem below.

The local well-posedness of the IBVP \eqref{eq:tp} is based on a standard energy estimate that we here briefly recall.

\begin{Proposition}
Assume that $(\mathcal{M},g)$ is a Lorentzian manifold with timelike boundary $\mathcal{T}$ and consider a domain of dependence region $\mathcal{D}$ containing a neighborhood of a point on the boundary, foliated by Cauchy hypersurfaces $\Sigma_t$, with $\Sigma_0= \Sigma$. Let $(t,x^A,x)$ be a coordinate chart that covers $\mathcal{D}$, such that $x$ is a boundary defining function of $\mathcal{T}$. 
Also, let $\phi$ be a solution to \eqref{eq:tp}.
Then, for any $t \ge 0$, we have 
$$
\int_{\Sigma_t} \left(|\partial \phi|^2+ \phi^2\right)  \mathrm{vol}_{\Sigma_t} \le C_t \int_{\Sigma} \left(|\partial \phi |^2 + \phi^2\right) \mathrm{vol}_{\Sigma}.
$$
where the constant $C_t$ depends on the $C^1(\mathcal{D})$ norms of $g,R$. 
\end{Proposition}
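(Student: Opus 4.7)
The plan is to carry out a standard vector-field energy estimate with multiplier $X=\partial_t$, which is future timelike on $\mathcal{D}$ (as long as $g$ stays close to its initial form, which is controlled by the $C^1$ norm appearing in the conclusion) and, crucially, tangent to $\mathcal{T}=\{x=0\}$ by the construction of the wave coordinates. Define the energy-momentum tensor $T_{\alpha\beta}[\phi]=\partial_\alpha\phi\,\partial_\beta\phi-\tfrac12 g_{\alpha\beta}g^{\mu\nu}\partial_\mu\phi\,\partial_\nu\phi$. Since $\square_g\phi=0$, one has $\nabla^\alpha T_{\alpha\beta}=0$, and hence $\nabla^\alpha(T_{\alpha\beta}X^\beta)=\tfrac12 T^{\alpha\beta}(\mathcal{L}_Xg)_{\alpha\beta}$, which is pointwise bounded by a constant depending on $\|g\|_{C^1(\mathcal{D})}$ times $|\partial\phi|^2$.

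Next, I apply the divergence theorem to $T_{\alpha\beta}X^\beta$ over the slab $\mathcal{D}_t=\mathcal{D}\cap\{0\le s\le t\}$. Its boundary consists of $\Sigma_t$, $\Sigma=\Sigma_0$, the lateral piece $\mathcal{T}\cap\mathcal{D}_t$, and possibly null portions cut off by the domain-of-dependence assumption; the latter contribute fluxes of favorable sign for $X$ future timelike and may be discarded. The $\Sigma_s$ flux controls $\int_{\Sigma_s}|\partial\phi|^2\,\mathrm{vol}_{\Sigma_s}$ from above and below by standard Lorentzian energy positivity, and the bulk error is linear in the current energy. The nontrivial contribution is the boundary flux on $\mathcal{T}$: since $g(X,N)=0$, one computes $T(N,X)=(N\phi)(X\phi)$, and the Robin condition $N\phi=R\phi$ turns this into $R\phi\,X(\phi)=\tfrac12 X(R\phi^2)-\tfrac12(XR)\phi^2$.

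The key step is to handle $\int_{\mathcal{T}\cap\mathcal{D}_t}\tfrac12 X(R\phi^2)\,\mathrm{vol}_{\mathcal{T}}$ by a tangential integration by parts on $\mathcal{T}$. This produces corner integrals $\tfrac12\int_{\mathcal{T}\cap\Sigma_s}R\phi^2$ at $s=t$ and $s=0$, plus a lower-order volume term coming from the $t$-derivative of the induced measure and from the piece $-\tfrac12(XR)\phi^2$. The corner integral at $\Sigma_t$ is absorbed into the bulk energy by the trace inequality $\int_{\partial\Sigma_t}\phi^2\le \varepsilon\int_{\Sigma_t}|\partial\phi|^2+C(\varepsilon)\int_{\Sigma_t}\phi^2$ applied with $\varepsilon$ small. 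To close the $\phi^2$ piece of the energy I would use $\|\phi(t,\cdot)\|_{L^2(\Sigma_t)}\le \|\phi(0,\cdot)\|_{L^2(\Sigma)}+\int_0^t\|\partial_t\phi\|_{L^2(\Sigma_s)}\,ds$, which converts bounds on $|\partial\phi|$ into bounds on $\phi$ itself. Gr\"onwall's inequality then yields $E(t)\le C_t E(0)$ with $C_t$ depending on $t$ and the $C^1$ norms of $g$ and $R$.

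The main obstacle, as is familiar for Robin boundary conditions, is that $T(N,X)$ on $\mathcal{T}$ is not of a definite sign; the integration-by-parts/trace combination is the essential device for absorbing it into the spacelike energy. Once this has been set up, the remaining manipulations (bulk error control, Gr\"onwall, conversion between $|\partial\phi|^2$ and $\phi^2$ energies) are entirely routine.
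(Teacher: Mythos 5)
Your proposal is correct and follows essentially the same route as the paper: the $\partial_t$-multiplier energy identity, discarding the null fluxes, rewriting the lateral flux $T(N,\partial_t)=R\phi\,\partial_t\phi$ as a $t$-derivative, integrating by parts along $\mathcal{T}$, and absorbing the resulting corner term via a trace inequality before applying Gr\"onwall. The only cosmetic difference is how the $\int_{\Sigma_t}\phi^2$ term is closed: the paper builds $C_R\int_{\Sigma_t}\phi^2$ into the definition of the energy and takes $C_R$ large, whereas you recover it from $\|\phi(t)\|_{L^2}\le\|\phi(0)\|_{L^2}+\int_0^t\|\partial_t\phi\|_{L^2}\,ds$; both work equally well.
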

\begin{proof}
Consider the energy momentum tensor 
$$
T_{\alpha \beta}[\phi]= \partial_\alpha \phi \partial_\beta \phi - \frac{1}{2}g_{\alpha \beta} g^{\mu\nu} \partial_\mu \phi \partial_\nu\phi.
$$
Let $T_{0\beta}[\phi]=T[\phi](\partial_t, \partial_\beta)$ and define the energy at time $t$ as 
\begin{eqnarray} \label{def:energyrob}
E(t):=\int_{\Sigma_t} T[\phi](\partial_t, n_{\Sigma_t})\mathrm{vol}_{\Sigma_t}  + C_R \int_{\Sigma_t} \phi^2\mathrm{vol}_{\Sigma_t} , 
\end{eqnarray}
for some constant $C_R$ depending on the $C^0(\mathcal{D})$ norms of $g,R$, where $n_{\Sigma_t}$ is the future unit normal to $\Sigma_t$.

A standard application of the divergence theorem to the vector field $T_0^\alpha[\phi]=g^{\alpha\beta}T_{0\beta}[\phi]$, in the domain of dependence region considered, allows to estimate $E(t)$ by its time integral, the initial data, and the boundary terms: 
\begin{align*}
E(t)\leq&\, E(0)+C\int^t_0E(s)ds+\int_{x=0, \,0 \le s \le t} \sqrt{-H} T_{0\beta} [\phi] N^\beta ds dx^A\\
=&\,E(0)+C\int^t_0E(s)ds+\int_{x=0, \,0 \le s \le t} \sqrt{-H} \partial_t \phi  R \phi ds dx^A,
\end{align*}
where $\sqrt{-H}$ denotes the induced volume form on the boundary and where we have discarded the flux term from the null boundary of the domain of dependence region, since it always has a favorable sign.

The last term above can be absorbed in the energy by integrating by parts in $\partial_t$ and then using trace inequality:
\begin{eqnarray*}
\int_{x=0,\, 0 \le s \le t} \sqrt{-H} \partial_t \phi  R\phi ds dx^A &=&- \int_{x=0,\, 0 \le s \le t}  \frac{1}{2}\partial_t(\sqrt{-H}  R ) \phi^2 ds dx^A\\
&&+ \int_{\{x=0\} \cap \Sigma_s}  \frac{1}{2}\sqrt{-H} R \phi^2 dx^A\bigg|^{s=t}_{s=0}\\
&\leq& C\int_0^t E(s)ds +D_R \int_{\Sigma_t} \phi^2\mathrm{vol}_{\Sigma_t}  +C E(0),
\end{eqnarray*}
Choosing $C_R$ large enough in \eqref{def:energyrob} to begin with, we deduce that
$$
E(t) \leq CE(0)+C \int_0^t E(s)ds. 
$$
Thus, the desired energy estimate follows by Gr\"onwall's inequality. 
\end{proof}

\end{document}